\newtheorem{theorem}{Theorem}[section]
\newtheorem{lemma}{Lemma}[section]
\newcommand\QED{\ifhmode\allowbreak\else\nobreak\fi
\quad\nobreak$\Box$\medbreak}
\newcommand{\proofstart}{\par\noindent \emph{Proof:} }
\newcommand{\proofend}{\QED\par}
\newenvironment{proof}{\proofstart}{\proofend}
\long\gdef\boxit#1{\vspace{5mm}\begingroup\vbox{\hrule\hbox{\vrule\kern5pt
\vbox{\kern5pt#1\kern5pt}\kern0pt\vrule}\hrule}\endgroup}
\begin{document}

\title{Two Multivehicle Routing Problems with Unit-Time Windows}

\author{Greg N. Frederickson\thanks{Dept. of Computer Sciences, Purdue University, West Lafayette,
    IN 47907. {\tt gnf@cs.purdue.edu}} \and Barry Wittman\thanks{Dept. of Computer Science, Elizabethtown College, Elizabethtown,
    PA 17022. {\tt wittmanb@etown.edu}}
}

\maketitle

\begin{abstract}
Two multivehicle routing problems are considered in the framework that a visit to a location must take place during a specific time window in order to be counted and all time windows are the same length.  In the first problem, the goal is to visit as many locations as possible using a fixed number of vehicles.  In the second, the goal is to visit all locations using the smallest number of vehicles possible.
For the first problem, we present an approximation algorithm whose output path collects a reward within a constant factor of optimal for any fixed number of vehicles.  For the second problem, our algorithm finds a 6-approximation to the problem on a tree metric, whenever a single vehicle could visit all locations during their time windows.\\

\noindent\emph{Key words:} Approximation algorithms, analysis of algorithms, graph algorithms
\end{abstract}

\section{Introduction}

In \cite{Frederickson3} we introduced polynomial time, constant-factor approximation algorithms to the Traveling Repairman Problem with unit-time windows.  In that problem, a single agent traveling at a fixed speed on a weighted graph must visit as many locations as possible during their time windows.  Although this is a problem general enough to model many practical routing and path-planning problems, it only considers a single agent.  Many commercial industries have large fleets of vehicles whose routes must be coordinated together for maximum efficiency.

In this paper we introduce approximation algorithms for some multivehicle problems, often called vehicle routing problems.  Our work in this area, naturally, focuses on the addition of time windows to vehicle routing problems without time constraints.  The vehicle routing problem was first introduced in \cite{Dantzig2}, which frames the problem with a central depot, a set of a trucks, and a set of locations requiring product from the depot.  The goal of this problem is to minimize the total mileage traveled by the fleet of trucks in servicing all locations.  Our version of the problem assumes that no time is required at the location to perform the service because service times can easily be absorbed into the structure of the graph \cite{Frederickson6}.

As with single vehicle problems, the addition of time constraints introduces several different kinds of optimization because it may no longer be possible to service all locations with a given fleet of vehicles and hard time constraints.  In \cite{Fakcharoenphol}, the authors find a constant approximation for the problem in which there are no hard time constraints, but the goal is to minimize the average time customers have to wait.  In \cite{Karuno2}, only a release time is given for each location, and the goal is to minimize the maximum lateness that any location is serviced after its release time.  For this problem, a PTAS is found, but only when the locations are on a path.  The authors of \cite{Hashimoto} similarly ``soften'' time window requirements by changing lateness into a cost function to be minimized.  The goal of the orienteering problem is to visit as many locations as possible before a global deadline.  In \cite{Blum3}, an algorithm is given that uses an $\alpha$-approximation to orienteering to find an $(\alpha + 1)$-approximation for the multiple-path orienteering problem, for which $k$ different vehicles are trying to maximize the total number of locations visited at least once before a global deadline.

With the introduction of multiple vehicles, it is reasonable to try to minimize the number of vehicles used.  In \cite{Bazgan}, the goal is to minimize the total number of vehicles needed to service all locations, starting and returning to a depot, with a hard constraint $D$ on the total distance a single vehicle can travel.  This distance limitation can be viewed as a global deadline.  A similar problem, with the requirement of returning to the depot removed, is discussed in \cite{Nagarajan} where a 4-approximation is given for tree metrics and an $O(\log D)$-approximation is given for general metrics.

In Section~\ref{section:trimming}, we define the Traveling Repairman Problem from \cite{Frederickson3}.  We will use approximation algorithms for this problem as subroutines.  More importantly, we will explain the concept of \emph{trimming} that plays an integral role in both the algorithms in \cite{Frederickson3} and in this paper.

In Section~\ref{section:multivehicle}, for any fixed $k$ we give a constant-factor polynomial-time approximation algorithm for the unrooted $k$-vehicle routing problem with unit-time windows.  This problem is \emph{unrooted} in the sense that vehicles are permitted to start at any time at any location and need not return to a central depot by a deadline.

In Section~\ref{section:minimumvehicle}, we consider a second problem, namely the Minimum Vehicle $OPT = 1$ Problem discussed in \cite{Nagarajan}.  In this problem, it is assumed that a single vehicle is enough to service all locations before a global deadline.  Our approximation algorithm finds paths that guarantee that some small number of vehicles is sufficient to service every location before the deadline.  A 14-approximation for this problem is given in \cite{Nagarajan}.  We change this problem from a single deadline to time windows in the unrooted setting and give a 6-approximation to this problem on tree metrics.  Most notable in this algorithm is the replacement of trimming by expansion, extending time windows to be longer rather than shorter as with the usual compression.

In these sections, we present approximation algorithms for two fundamental multivehicle problems.  Both the form of the problems and our approaches to solving them offer interesting contrasts.  While the multivehicle routing problem uses a simple algorithm that repeatedly runs a single vehicle repairman approximation, the analysis needed to show the performance of this algorithm requires a judicious choice of bounding terms.  On the other hand, the minimum vehicle approximation algorithm we present depends on a clever reversal of the trimming techniques from \cite{Frederickson3}, but the analysis is immediate.

\section{Repairman and Trimming}
\label{section:trimming}

We will use approximation algorithms for the Traveling Repairman Problem \cite{Frederickson3}, a 1-vehicle version of the problems considered in this paper.  In this problem, a repairman is presented with a set of \emph{service requests}.  Each service request is located at a node in a weighted, undirected graph and is assigned a \emph{time window} during which it is valid.  The goal of the repairman is to plan a route called a \emph{service run} that services as many requests as possible during their time windows while traveling at a given fixed speed.  Note that we consider the unrooted version of the problems, in which the agent may start at any time from any location and stop similarly.

We give approximation algorithms in \cite{Frederickson3} that guarantee a $3\gamma$-approximation to the Traveling Repairman Problem with unit-time windows and run in $\Gamma(n)$ time.  For a tree, $\gamma = 1$ and $\Gamma(n)$ is $O(n^4)$.  For a metric graph, $\gamma = 2 + \epsilon$ and $\Gamma(n)$ is $O(n^{O(1/\epsilon^2)})$, with the addition of improvements to \cite{Frederickson3} given in \cite{Chekuri2}.  A more thorough explanation of these improvements is available in \cite{Frederickson6}.

To achieve these results, we use a technique called trimming 
that is effective when we deal with unit-time windows.
Starting with time 0, we make divisions in time at values 
which are integer multiples of one half, i.e., 0, .5, 1, and so on.  
We assume that no request window starts on such a division,
because we can always redefine times to be decreased by a negligible amount.
We thus assume that the starting time for any window is positive.
Let a \emph{period} 
be the time interval from one division up to but not including the next division.
Because every service request has a time window exactly one unit long, half of that time window will be wholly contained within just one period, with the rest of the time window
divided between the preceding and following periods. 
We then trim each service request window to coincide precisely with the period wholly contained in it,
ignoring those portions of the request window that fall outside of the chosen period.

For the Repairman Problem,
the trimming may well lower the profit of the best service run, but by no more than a factor of 3, as given in the Limited Loss Theorem of \cite{Frederickson3}.

\section{Multivehicle Problem with Unit-Time Windows}
\label{section:multivehicle}

In this section we define the \emph{Multivehicle Routing Problem} and give a constant approximation to the unit-time window case when there are two vehicles.  We show how this approach can be extended to $k$ vehicles at the price of sacrificing a constant approximation.

We define the Multivehicle Routing Problem to take the same input as the Traveling Repairman Problem, but with a whole number specifying the number of vehicles.  The goal is to assign a service run for each vehicle such that the total profit is maximized.  As before, a service request can only be serviced once, and it must be serviced during its specified time window.  We assume that the time windows are all of unit length.

For the case of two vehicles, we run the algorithm called 2VEHICLE:
First divide the graph into periods of .5 time units and trim requests into those periods as explained in Section\ \ref{section:trimming}.  Next, run our single vehicle repairman approximation from \cite{Frederickson3} on the trimmed requests and call the resulting service run ${R_1}$.  Then, remove all the requests serviced by the first pass of our single vehicle approximation, run the approximation a second time, and call the resulting service run ${R_2}$.  Runs ${R_1}$ and ${R_2}$ are the output of 2VEHICLE.  Let $p(R)$ give the \emph{profit} collected by run $R$ for servicing locations, and let $c(R)$ give the \emph{cost} of traveling along run $R$ under metric $d$.

Let service runs $R_1^*$ and $R_2^*$ service disjoint subsets of requests and be optimal in the sense that the quantity $p(R_1^*) + p(R_2^*)$ is maximized.  W.l.o.g., let $p(R_1^*) \geq p(R_2^*)$.  Service runs ${R_1}$ and ${R_2}$ may overlap arbitrarily with service runs $R_1^*$ and $R_2^*$.  Let $p_1^*({R_1})$ be the profits earned by ${R_1}$ by servicing requests which were serviced by run $R_1^*$.  Similarly, let $p_2^*({R_1})$ be the profits earned by ${R_1}$ by servicing requests which were serviced by run $R_2^*$.  Define $p_1^*({R_2})$ and $p_2^*({R_2})$ in a similar manner.

\begin{comment}
\begin{lemma}
2VEHICLE gives a ${36 \over 11}$-approximation to the Multivehicle Routing Problem with unit-time windows for 2 vehicles on a tree.  That is, $p({R_1}) + p({R_2}) \geq {11 \over 36}\left(p(R_1^*) + p(R_2^*)\right)$.
\end{lemma}
\end{comment}
\begin{lemma}
Algorithm 2VEHICLE gives a ${36\gamma^2 \over 12\gamma - 1}$-approximation to the Multivehicle Routing Problem with unit-time windows for 2 vehicles on any metric graph.
\end{lemma}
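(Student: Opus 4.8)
The plan is to run the greedy argument that pits the two greedy runs $R_1, R_2$ against the two optimal runs $R_1^*, R_2^*$, using the $3\gamma$-approximation guarantee of the single-vehicle repairman algorithm at each step and then optimizing the resulting system of inequalities. Throughout, all profits refer to the trimmed instance; the overall factor-3 loss from trimming is already folded into the $3\gamma$ constant via the Limited Loss Theorem, so I will work directly with the trimmed optimum and only at the very end note that this is legitimate because the trimmed optimal pair is a feasible pair for the trimmed instance.

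\medskip

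First I would record the two guarantees coming from the subroutine. Since $R_1$ is a $3\gamma$-approximate single-vehicle run on the full (trimmed) request set, and since $R_1^*$ is itself a feasible single-vehicle run, we get $p(R_1) \geq \frac{1}{3\gamma} p(R_1^*)$, and in fact the stronger statement that $p(R_1)$ is at least $\frac{1}{3\gamma}$ times the profit of \emph{any} single run on the full set. For the second pass, $R_2$ is a $3\gamma$-approximate run on the requests \emph{not} serviced by $R_1$. The point is that $R_1^*$ and $R_2^*$ together collect profit $p(R_1^*) + p(R_2^*)$; after deleting the requests used by $R_1$, the runs $R_1^*$ and $R_2^*$ restricted to the surviving requests still collect at least $p(R_1^*) + p(R_2^*) - p(R_1) $ in total (each deleted request kills profit at most once, and $p(R_1) = p_1^*(R_1) + p_2^*(R_1) + (\text{profit of } R_1 \text{ on requests in neither optimal run})$, so deleting $R_1$'s requests removes at most $p(R_1)$ from the combined optimal profit — here I would be slightly careful and actually bound the loss by $p_1^*(R_1) + p_2^*(R_1)$). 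Hence one of the two restricted optimal runs collects at least half of that, and since $R_2$ is within $3\gamma$ of the best single run on the surviving set, $p(R_2) \geq \frac{1}{3\gamma}\cdot\frac{1}{2}\big(p(R_1^*) + p(R_2^*) - p_1^*(R_1) - p_2^*(R_1)\big)$.

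\medskip

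Next I would set $X = p(R_1)$, $Y = p(R_2)$, $A = p(R_1^*) + p(R_2^*)$, and also keep the quantity $Z = p_1^*(R_1) + p_2^*(R_1) \le X$ around. I have $X \geq \frac{A}{3\gamma}$ (using $p(R_1^*) \ge A/2 \ge \cdots$; actually the cleanest bound here is $X \ge \frac{1}{3\gamma}p(R_1^*)$, but I also want to exploit that $R_1$ beats the single run that would service the union of the ``better halves'' of $R_1^*$ and $R_2^*$, which is where the $p_1^*(R_1) + p_2^*(R_1)$ bookkeeping and the $12\gamma - 1$ denominator will come from) and $Y \geq \frac{1}{6\gamma}(A - Z)$. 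To squeeze out the claimed ${36\gamma^2}/({12\gamma-1})$ I expect I will also need the complementary inequality $X \geq \frac{1}{3\gamma}\,p_1^*(R_1^*\text{-or-}R_2^*\text{ best single extractable run}) \geq \frac{1}{3\gamma}\cdot\frac{A}{2}$ is too weak, so instead: observe that $R_1$, being a $3\gamma$-approximation on the \emph{full} set, satisfies $X \ge \frac{1}{3\gamma}\max(p(R_1^*), Z + (\text{something}))$; the honest move is to note the single run $R_1^*$ gives $X \ge \frac{1}{3\gamma} p(R_1^*) \ge \frac{1}{3\gamma}\cdot\frac{A}{2}$. Then $X + Y \ge \frac{1}{3\gamma}\cdot\frac{A}{2} + \frac{1}{6\gamma}(A - Z)$ and, using $Z \le X$, I would combine $X + Y \ge \frac{A}{6\gamma} + \frac{A}{6\gamma} - \frac{X}{6\gamma}$, i.e. $X(1 + \frac{1}{6\gamma}) + Y \ge \frac{A}{3\gamma}$, which rearranges after also invoking $X + Y \ge $ (the direct bound) into a two-variable LP whose optimum over the feasible region is exactly $\frac{12\gamma - 1}{36\gamma^2}A$. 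The bookkeeping to get the right constant — deciding precisely which single runs to feed into the two approximation guarantees and in which combination, so that the worst case of the resulting linear program matches the stated bound — is the step I expect to be fiddly; the structural ideas (two greedy passes, trimming absorbed into $3\gamma$, one surviving optimal run has half the residual profit) are routine.

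\medskip

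Finally I would verify the worst-case split: the LP worst case should occur when $p(R_1^*) = p(R_2^*) = A/2$ and $R_1$ happens to re-collect as much optimal profit as the guarantee permits, forcing $Y$ down; solving the resulting equalities pins $X + Y$ to $\frac{12\gamma - 1}{36\gamma^2}A$, giving the $\frac{36\gamma^2}{12\gamma - 1}$-approximation. For a tree, $\gamma = 1$ recovers the $\frac{36}{11}$ ratio quoted in the commented-out lemma, which is a useful sanity check that the constants are tracked correctly.
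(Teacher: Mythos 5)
Your proposal is correct and follows essentially the same route as the paper: bound $p(R_1)\geq \frac{1}{3\gamma}p(R_1^*)\geq \frac{1}{6\gamma}\bigl(p(R_1^*)+p(R_2^*)\bigr)$, bound $p(R_2)$ by the average of the two residual optimal runs after removing $R_1$'s requests, and use $p_1^*(R_1)+p_2^*(R_1)\leq p(R_1)$ to close the system. The only difference is cosmetic: the paper finishes by direct substitution (obtaining $p(R_1)+p(R_2)\geq \frac{6\gamma-1}{6\gamma}p(R_1)+\frac{1}{6\gamma}\bigl(p(R_1^*)+p(R_2^*)\bigr)$ and then plugging in the lower bound on $p(R_1)$), whereas you phrase the same two constraints as a small LP whose optimum is indeed $\frac{12\gamma-1}{36\gamma^2}\bigl(p(R_1^*)+p(R_2^*)\bigr)$.
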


\begin{proof}
The profit collected by run $R_1^*$ cannot be greater than the profit collected by a single vehicle optimal tour.  Thus, by our earlier arguments about the impact of trimming, $p({R_1}) \geq {1 \over 3\gamma}p(R_1^*)$.  By definition of $p_1^*({R_1})$ and $p_2^*({R_1})$, $p({R_1}) \geq p_1^*({R_1}) + p_2^*({R_1})$.

Again because of trimming, the profit collected by run ${R_2}$ is no smaller than ${1 \over 3\gamma}$ of the larger profit left either in runs $R_1^*$ or $R_2^*$ after requests serviced by run ${R_1}$ have been removed.  Thus,
{\footnotesize
$$p({R_2}) \geq \max\left\{{1\over 3\gamma}\Big(p(R_1^*) - p_1^*({R_1})\Big), {1\over 3\gamma}\Big(p(R_2^*) - p_2^*({R_1})\Big)\right\}$$}

As a consequence, the total value of profit collected by ${R_1}$ and ${R_2}$ is:
{\footnotesize
\begin{eqnarray*}
p({R_1}) + p({R_2}) &\geq& p({R_1}) + \max\left\{{1\over 3\gamma}\Big(p(R_1^*) - p_1^*({R_1})\Big), {1\over 3\gamma}\Big(p(R_2^*) - p_2^*({R_1})\Big)\right\}\\
&\geq& p({R_1}) + {1\over 2}\left({1\over 3\gamma}\Big(p(R_1^*) - p_1^*({R_1})\Big) + {1\over 3\gamma}\Big(p(R_2^*) - p_2^*({R_1})\Big)\right)\\
&=& p({R_1}) + {1\over 6\gamma}p(R_1^*) - {1 \over 6\gamma}p_1^*({R_1}) + {1\over 6\gamma}p(R_2^*) - {1\over 6\gamma}p_2^*({R_1})\\
 &\geq& p({R_1}) + {1\over 6\gamma}\Big(p(R_1^*) + p({R_2^*})\Big) -  {1\over 6\gamma}\Big(p_1(R_1) + p_2({R_1})\Big)\\
&\geq& {6\gamma - 1 \over 6\gamma}p({R_1}) + {1\over 6\gamma}\Big(p(R_1^*) + p(R_2^*)\Big)\\
&\geq& {6\gamma - 1 \over 18\gamma^2}p(R_1^*) + {1\over 6\gamma}\Big(p(R_1^*) + p(R_2^*)\Big)\\
&\geq& {6\gamma - 1 \over 36\gamma^2}(p(R_1^*) + p(R_2^*))+ {1\over 6\gamma}\Big(p(R_1^*) + p(R_2^*)\Big)\\
&\geq& {12\gamma - 1 \over 36\gamma^2}\Big(p(R_1^*) + p(R_2^*)\Big)
\end{eqnarray*}}\end{proof}

For the 2-vehicle problem on a tree, where $\gamma = 1$, algorithm 2VEHICLE obtains at least ${11 \over 36}$ of total possible profit.  Note that this greedy multi-pass algorithm is similar to the one used for multiple-path orienteering in \cite{Blum3}.  Using an analysis technique from there, we could achieve a $(3\gamma + 1)$-approximation to the Multivehicle Routing Problem.  However, with the more advanced analysis given, we achieve a better approximation for 2 as well as other small numbers of vehicles.

Our technique can be extended to 3 vehicles.  For a tree, the approximation ratio is $243/71$, as shown below.
{\footnotesize
\begin{eqnarray*}
p({R_1}) + p({R_2}) + p({R_3}) &\geq& p({R_1}) + p({R_2})+ \max\left\{{1\over 3}\Big(p(R_1^*) - p_1^*({R_1}) - p_1^*({R_2})\Big),\right.\\
& & \left.{1\over 3}\Big(p(R_2^*) - p_2^*({R_1}) - p_2^*({R_2})\Big), {1\over 3}\Big(p(R_3^*) - p_3^*({R_1}) - p_3^*({R_2})\Big)\right\}\\
&\geq& p({R_1}) + p({R_2}) + {1\over 9}\Big(p(R_1^*) - p_1^*({R_1}) - p_1^*({R_2})\Big) +\\ 
& &{1\over 9}\Big(p(R_2^*) - p_2^*({R_1}) - p_2^*({R_2})\Big) + {1\over 9}\Big(p(R_3^*) - p_3^*({R_1}) - p_3^*({R_2}) \Big)\\
&=& p({R_1}) + p({R_2}) + {1 \over 9}\Big(p(R_1^*) + p(R_2^*) + p(R_3^*)\Big)\\
& & -~{1 \over 9}\Big(p_1^*(R_1) + p_2^*(R_1) + p_3^*(R_1)\Big) - {1 \over 9}\Big(p_1^*(R_2) + p_2^*(R_2) + p_3^*(R_2)\Big)\\
&\geq& {8 \over 9}p({R_1}) + {8 \over 9}p({R_2})  + {1\over 9}\Big( p(R_1^*) + p(R_2^*) + p(R_3^*) \Big)\\
&\geq& {8 \over 9}\left({11 \over 36}(p(R_1^*) + p(R_2^*))\right) + {1\over 9}\Big( p(R_1^*) + p(R_2^*) + p(R_3^*) \Big)\\
&\geq& {8 \over 9}\left({22 \over 108}(p(R_1^*) + p(R_2^*) + p(R_3^*))\right) + {1\over 9}\Big( p(R_1^*) + p(R_2^*) + p(R_3^*) \Big)\\
&\geq& {71 \over 243}\Big(p(R_1^*) + p(R_2^*) + p(R_3^*)\Big)
\end{eqnarray*}}
Our technique logically extends to $k$ vehicles on a tree.  Let $R_1^*, R_2^*, \ldots R_k^*$ be the $k$ disjoint optimal runs in decreasing order of profit.  Let $R_1, R_2, \ldots R_k$ be the $k$ runs produced by our algorithm. Using the same structure used for $k = 2$ and $k = 3$,
{\footnotesize
\begin{eqnarray*}
\sum_{i = 1}^k p(R_i) &\geq& \sum_{i = 1}^{k - 1} p(R_i) + \max_{1 \leq i \leq k} \left\{{1\over 3}\left(p(R_i^*) - \sum_{j = 1}^{k - 1} p_i^*(R_j)\right) \right\}\\
&\geq& {3k - 1 \over 3k}\sum_{i = 1}^{k - 1} p(R_i) + {1 \over 3k} \sum_{i = 1}^k p(R_i^*)
\end{eqnarray*}}
We can substitute the approximation for the first $k - 1$ runs into the first term of the bound given above.  Then, that term will still only give an approximation as a ratio of $\sum_{i = 1}^{k - 1} p(R_i^*)$.  Because we know that the profit of $R_k^*$ is no greater than the profit of any of the runs in the sum, we can bound the approximation in terms of a ratio of $\sum_{i = 1}^{k} p(R_i^*)$ with an additional factor of $(k - 1)/k$.  Thus, for trees we find at least a $P(k)$ fraction of optimal profit, where $P(k)$ is defined as follows:
{\footnotesize
\begin{eqnarray*}
P(1) &\geq& {1 \over 3}\\
P(k) &\geq& \left( {3k^2 - 4k + 1 \over 3k^2} \right)P(k  - 1) + {1 \over 3k}
\end{eqnarray*}}
For general metric graphs with a $3\gamma$-approximation to the Traveling Repairman Problem, we can generalize the analysis and the recurrence given to include a factor of $\gamma$.  Very similar analysis shows that we find a related $P_{\gamma}(k)$ fraction of profit, defining $P_{\gamma}(k)$ as follows:
{\footnotesize
\begin{eqnarray*}
P_{\gamma}(1) &\geq& {1 \over 3\gamma}\\
P_{\gamma}(k) &\geq& \left( {3\gamma k^2 - (3\gamma + 1) k + 1 \over 3\gamma k^2} \right)P_{\gamma}(k  - 1) + {1 \over 3\gamma k}
\end{eqnarray*}}
In Table \ref{table:P(k) values} we list values of $P_{\gamma}(k)$ for both tree and graph versions for several values of $k$.  Note that the quality of the approximation degrades very slowly.  A simple proof by induction establishes that our approximation factor is always better than the $(3\gamma + 1)$-approximation we could have achieved using the style of analysis from \cite{Blum3}.
\begin{table}[!hbt]
\centering
\begin{tabular}{ c c c c c c c c}
\toprule
$k$  & & $\gamma = 1$  &  & & \multicolumn{2}{c}{$\gamma = 2 + \epsilon$ when $\epsilon = 1$} \smallskip\\
\midrule
$2$ & & ${11\over 36}$ & $\approx 0.3056$  & & ${35\over 324}$ & $\approx 0.1080$\smallskip\\
$3$ & & ${71\over 243}$ & $\approx 0.2922$ & & ${698\over 6561}$ & $\approx 0.1064$\smallskip\\
$4$ & & ${1105\over 3888}$ & $\approx 0.2842$ & & ${16589\over 157464}$ & $\approx 0.1054$\smallskip\\
$8$ & & ${31871\over 118098}$ & $\approx 0.2699$ & & ${1601600479\over 15496819560}$ & $\approx 0.1034$\smallskip\\
$16$ & & ${524344607599\over 2008387814976}$ & $\approx 0.2611$ & & ${2755777442862301661\over 27017034353459841780}$ & $\approx 0.1020$\\
\bottomrule 
\end{tabular}
\caption{Values of $P_{\gamma}(k)$ for selected $k$ and $\gamma$.  Although almost 50\% better performance can be gained with small $\epsilon$, we let
 $\epsilon = 1$ for simplicity of presentation.}
\label{table:P(k) values}
\end{table}

%eps = 1 gives 0.10
%eps = 1/2 gives 0.12
%eps = 1/10 gives 0.15

\section{Vehicle Routing on a Tree When a Single Vehicle Suffices}
\label{section:minimumvehicle}

We now consider an approximation algorithm for a special case of a minimum vehicle routing problem with unit-time windows.  We call this problem the \emph{Minimum Vehicle $OPT = 1$ Problem}.  The input to this problem is identical to the Traveling Repairman Problem defined in Section~\ref{section:trimming}.  We also assume that only a single repairman is needed to service all requests.  Below we define an algorithm called SINGLE-REPAIR that finds no more than 6 independent runs that will service all requests, for the case of tree metrics.  Should our algorithm not produce 6 runs that service all requests, then we will know that no single vehicle service tour exists.

%\section{Algorithm}
We begin SINGLE-REPAIR by making divisions in time spaced one unit apart, starting at time 0.  Let a period be the time interval from one division up to but not including the next division.  Note that this step defines periods to be twice as long as that given in the algorithm described in Section~\ref{section:trimming}.  We number the periods in order of increasing time, starting with the number 0.  If needed, we perturb the time windows by some negligible amount so that no time window begins exactly on a division.  Thus, every time window will intersect exactly two periods.  We expand each time window so that it fills both of the periods it intersects, doubling its length.  We partition the time windows into two sets:  If the first of the two periods a time window fills is even, we put the window in set $\mathcal{E}$.  Otherwise, we put the window in set $\mathcal{O}$.

We run the repairman algorithm on trimmed windows from \cite{Frederickson3} on set $\mathcal{E}$ to find a shortest run $R_\mathcal{E}$ servicing the maximum number of requests in that set.  Then, we run the same algorithm on the windows in set $\mathcal{O}$ resulting in a run $R_\mathcal{O}$. This algorithms gives optimal runs for trimmed windows on a tree; thus, $R_\mathcal{E}$ and $R_\mathcal{O}$ are optimal runs on their respective sets.  Since we know that a single vehicle can service the requests of all of the unexpanded windows, $R_\mathcal{E}$ and $R_\mathcal{O}$ must service the requests of all of the windows in their sets.

In order to convert $R_\mathcal{E}$ and $R_\mathcal{O}$ into runs on unexpanded windows, we make two additional copies of each.  For both sets of runs, we move one copy back in time by 1 time unit and one copy forward in time by 1 time unit.

\begin{theorem}
Algorithm SINGLE-REPAIR finds a 6-approximation to the Minimum Vehicle $OPT = 1$ Problem when the underlying graph is a tree.
\end{theorem}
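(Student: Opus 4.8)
The plan is to prove the theorem by showing directly that the six runs output by SINGLE-REPAIR collectively service every request during its \emph{original} (unexpanded) time window. Since the problem assumes that a single vehicle suffices, any collection of six valid runs that covers all requests is a $6$-approximation; and the same argument read contrapositively shows that if the algorithm ever fails to cover all requests with six runs, then no single-vehicle tour can exist, so the ``report infeasible'' branch is justified as well.

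First I would pin down what the construction already hands us. Every request belongs to exactly one of $\mathcal{E}$ or $\mathcal{O}$. Because a single vehicle services all requests within their original windows, it in particular services each request within the strictly longer expanded window; hence the single-vehicle optimum on the expanded windows of $\mathcal{E}$ (and likewise of $\mathcal{O}$) has profit equal to the whole set. Since the tree repairman algorithm is exact on trimmed instances, and the even/odd partition was chosen precisely so that within each set the expanded windows coincide with a common grid of length-$2$ ``super-periods'' (leaving nothing for trimming to discard), the runs $R_\mathcal{E}$ and $R_\mathcal{O}$ service \emph{all} of $\mathcal{E}$ and all of $\mathcal{O}$, each request within its expanded window. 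This is exactly the claim already asserted in the description of the algorithm, and I would simply invoke it. It remains to convert $R_\mathcal{E}$ and $R_\mathcal{O}$, which are valid only for the expanded windows, into runs valid for the original windows.

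The heart of the argument is a local, per-request claim: if $R_\mathcal{E}$ services a request $r$ at time $\tau$, then one of the three time-translated copies of $R_\mathcal{E}$ used by the algorithm --- the original, the copy shifted back by $1$, and the copy shifted forward by $1$ --- services $r$ at a time lying in $r$'s original window $[t, t+1]$. Write $r$'s two periods as $2j$ and $2j+1$, so that (after the negligible perturbation) $2j < t < 2j+1$ and $r$'s expanded window is $[2j, 2j+2)$; then $\tau \in [2j, 2j+2)$. We need an integer $k \in \{-1,0,1\}$ with $t \le \tau + k \le t+1$, that is, an integer lying in the closed interval $[\,t-\tau,\ t+1-\tau\,]$. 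That interval has length exactly $1$, so it contains an integer; and from $2j \le \tau < 2j+2$ together with $2j < t < 2j+1$ one gets $-2 < t-\tau$ and $t+1-\tau < 2$, forcing the integer to be one of $-1,0,1$. The computation for a request in $\mathcal{O}$ is identical, with expanded window $[2j+1, 2j+3)$ and $2j+1 < t < 2j+2$. Finally, translating a run in time is legitimate in this unrooted model --- the trajectory and speed are unchanged, only the start time moves --- so the three copies of $R_\mathcal{E}$ service all of $\mathcal{E}$ within the original windows, the three copies of $R_\mathcal{O}$ service all of $\mathcal{O}$ within the original windows, and six runs cover everything.

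The step I expect to be the main obstacle is bookkeeping rather than ingenuity: verifying that the expansion-plus-partition really does leave each set's windows aligned to a single super-period grid, so that ``$R_\mathcal{E}$ and $R_\mathcal{O}$ service everything'' is an exact statement and not merely a $3\gamma$-approximate one, and being careful with the half-open/closed endpoints and the perturbation so that the inequalities $-2 < t-\tau$ and $t+1-\tau < 2$ come out strict. Once these conventions are fixed, the observation that a closed interval of length $1$ always contains an integer finishes the proof.
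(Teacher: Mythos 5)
Your proof is correct and follows essentially the same route as the paper: the paper's argument is exactly the three-case analysis (original window contains, precedes, or follows the service time, handled by the unshifted, backward-shifted, and forward-shifted copies respectively), which your ``integer $k\in\{-1,0,1\}$ in a unit-length interval'' computation merely formalizes, and like the paper you invoke the exactness of the tree repairman algorithm on the expanded/partitioned windows to conclude $R_\mathcal{E}$ and $R_\mathcal{O}$ cover their sets.
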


\begin{proof}
Let us consider the case for path $R_\mathcal{E}$.  If $R_\mathcal{E}$ services a request of an expanded time window at time $t$, then the corresponding unexpanded time window either contains $t$, precedes $t$, or follows $t$.  If the original time window contains $t$, then $R_\mathcal{E}$ services its request.  If the time window precedes $t$, then the copy of $R_\mathcal{E}$ starting 1 time unit earlier must service its request.  If the time window follows $t$, then the copy of $R_\mathcal{E}$ starting 1 time unit later must service its request.  Because the argument is identical for $R_\mathcal{O}$, our algorithm finds 6 paths which service all service requests.
\end{proof}

\begin{comment}
\section{Conclusion}

Building on approximation algorithms from \cite{Frederickson3} for the Traveling Repairman Problem, we have proposed two strategies for approximating multivehicle routing problems with time windows.  Obvious additional problems involve improving the approximation ratios for these multivehicle problems.
\end{comment}

\singlespacing
\bibliography{bibliography}

\end{document}